\title{Behavioral Specification Theories: \\ an Algebraic Taxonomy}
\author{Uli Fahrenberg\inst1 \and Axel Legay\inst2}
\institute{{\'E}cole polytechnique, Palaiseau, France \and
  Universit{\'e} Catholique de Louvain, Belgium}
\begin{document}

\maketitle

\begin{abstract}
  We develop a taxonomy of different behavioral specification theories
  and expose their algebraic properties.  We start by clarifying what
  precisely constitutes a behavioral specification theory and then
  introduce logical and structural operations and develop the
  resulting algebraic properties.  In order to motivate our
  developments, we give plenty of examples of behavioral specification
  theories with different operations.
\end{abstract}

\section{Introduction}

Behavioral specification theories are specification formalisms for
formal models which are enriched with logical and structural
operations.  This allows for incremental and compositional design and
verification and has shown itself to be a viable way to avoid the
habitual state-space explosion problems associated with the
verification of complex models.

Behavioral specification theories have seen significant attention in
recent years \cite{DBLP:journals/scp/AcetoFFIP13,
  DBLP:conf/fase/BauerDHLLNW12, DBLP:conf/atva/BenesCK11,
  DBLP:conf/avmfss/Larsen89, DBLP:conf/concur/Larsen90,
  DBLP:journals/iandc/BenesFKLT20, DBLP:conf/concur/BenesDFKL13,
  DBLP:conf/lics/LarsenX90, DBLP:journals/entcs/Raclet08,
  DBLP:journals/eatcs/AntonikHLNW08, DBLP:journals/tecs/BujtorSV17,
  DBLP:journals/tecs/BujtorV15, DBLP:conf/ictac/CaillaudR12}.
Generally speaking, they have the property that
the specification formalism is an extension of the modeling formalism,
so that specifications have an operational interpretation and models
are verified by comparing their operational behavior against the
specification's behavior.

Popular examples of behavioral specification theories are modal
transition systems~\cite{DBLP:journals/eatcs/AntonikHLNW08,
  DBLP:journals/tecs/BujtorV15, DBLP:conf/avmfss/Larsen89},
disjunctive modal transition
systems~\cite{DBLP:journals/tecs/BujtorSV17,
  DBLP:conf/concur/BenesDFKL13, DBLP:conf/ictac/FahrenbergLT14,
  DBLP:journals/iandc/BenesFKLT20, DBLP:conf/atva/BenesCK11,
  DBLP:journals/jlap/FahrenbergL20, DBLP:conf/sofsem/FahrenbergL17,
  DBLP:conf/lics/LarsenX90}, and acceptance
automata~\cite{DBLP:journals/entcs/Raclet08,
  DBLP:conf/ictac/CaillaudR12}.  Also relations to contracts and
interfaces have been exposed~\cite{DBLP:conf/fase/BauerDHLLNW12,
  DBLP:journals/fuin/RacletBBCLP11}, as have extensions for real-time,
probabilistic, and quantitative specifications and for models with
data~\cite{DBLP:journals/mscs/BauerJLLS12,
  DBLP:journals/soco/FahrenbergKLT18, conf/fit/FahrenbergL12,
  journals/lmcs/DelahayeFLL14, DBLP:conf/qest/DelahayeFLL13,
  DBLP:conf/mfcs/BauerFJLLT11, DBLP:journals/scp/BertrandLPR12,
  DBLP:journals/sttt/DavidLLNTW15, DBLP:journals/fmsd/BauerFJLLT13,
  DBLP:journals/acta/FahrenbergL14}.

Except for the work by Vogler~\etal
in~\cite{DBLP:journals/tecs/BujtorSV17, DBLP:journals/tecs/BujtorV15}
and our own~\cite{DBLP:journals/jlap/FahrenbergL20}, behavioral
specification theories have been developed only to characterize
bisimilarity (or variants like timed or probabilistic bisimilarity).
While bisimilarity is an important equivalence relation on models,
there are many others which also are of interest.  Examples include
nested and $k$-nested simulation~\cite{DBLP:journals/iandc/GrooteV92,
  DBLP:journals/iandc/AcetoFGI04}, ready or
$\tfrac23$-simulation~\cite{DBLP:conf/popl/LarsenS89}, trace
equivalence~\cite{DBLP:journals/cacm/Hoare78}, impossible
futures~\cite{DBLP:books/sp/Vogler92}, or the failure semantics
of~\cite{DBLP:journals/tecs/BujtorV15, DBLP:journals/tecs/BujtorSV17,
  DBLP:journals/jacm/BrookesHR84, DBLP:journals/acta/Vogler89,
  DBLP:conf/icalp/Pnueli85} and others.  We have addressed some of
these equivalences in~\cite{DBLP:journals/jlap/FahrenbergL20}.

In this survey we take a step back and develop a \emph{systemization}
or \emph{taxonomy} of different behavioral specification theories and
expose their \emph{algebraic} properties.  As an example, the most
basic ingredient of a behavioral specification theory is a preorder of
\emph{refinement} on specifications, turning the set of specifications
into a partial order up to~$\equiv$, the equivalence generated by
refinement.  Now if the refinement preorder admits least upper bounds,
then this binary operation is usually called \emph{conjunction}, and
the set of specifications becomes a meet-semilattice up to~$\equiv$.
Conjunction is a useful ingredient of any specification theory, but
some also admit \emph{disjunctions}, thus turning them into
\emph{distributive lattices} up to~$\equiv$.

We believe that a systemization as we set out for here is useful to
clarify which properties one needs or expects of behavioral
specification theories, and that it may help in developing new
behavioral specification theories, both for equivalence relations
different from bisimilarity and for more intricate models such as
real-time, probabilistic, or hybrid systems.

To develop our systemization, we first have to clarify what precisely
\emph{is} a behavioral specification theory.  Here we follow the
seminal work of Pnueli~\cite{DBLP:conf/icalp/Pnueli85}, Hennessy and
Milner~\cite{DBLP:journals/jacm/HennessyM85}, and
Larsen~\cite{DBLP:conf/concur/Larsen90} and argue that a behavioral
specification theory is built on an \emph{adequate} and
\emph{expressive} specification formalism equipped with a mapping from
models to their \emph{characteristic formulae}, which provides the
extension of the modeling formalism by the specification formalism.
This is the theme of Sections~2 and~3.

Section~4 then introduces behavioral specification theories, and
Section~5 makes precise what it means to have logical operations on
specifications.  Section~6 is concerned with \emph{structural}
operations on specifications: composition and quotient.  When present
in a specification theory, these can be used for compositional design
and verification.  Algebraically, a specification theory which has all
the logical and structural operations forms a \emph{residuated
  lattice} up to~$\equiv$, a well-understood algebraic
structure~\cite{JipsenT02} which also appears in linear
logic~\cite{DBLP:journals/tcs/Girard87} and other areas.

All throughout Sections~2 to~6, we give plenty of examples, taken from
our own work in~\cite{DBLP:journals/jlap/FahrenbergL20,
  DBLP:journals/iandc/BenesFKLT20,
  DBLP:journals/soco/FahrenbergKLT18}, of specification theories which
have the required operations.
In the final Section~7 we survey a few other behavioral specification
theories, for real-time and probabilistic models, in order to expose
their particular algebraic properties.  We make no claim to
completeness of this survey; indeed there are many other examples
which we do not treat here.  The paper finishes with a scheme which
sums up the relevant algebraic structures and an overview of the
properties of the different behavioral specification theories
encountered.

\section{Models and Specifications}

Let $\Spec$ be a set of \emph{specifications}, $\Proc$ a set of
\emph{models}, $\mathord{\models}\subseteq \Proc\times \Spec$ a
relation between specifications and models, and
$\mathord{\sim}\subseteq \Proc\times \Proc$ an equivalence relation on
$\Proc$.  The intuition is that $\Spec$ is to provide specifications
for the models in $\Proc$ through the relation $\models$, but \emph{up
  to $\sim$}, so that two models which are equivalent cannot be
distinguished by their specifications.  We will make this precise
below.

We will generally use $\mcal S$ for specifications and $\mcal M$ for
models.  For $\mcal S\in \Spec$, let
$\Mod{ \mcal S}=\{ \mcal M\in \Proc\mid \mcal M\models \mcal S\}$
denote its \emph{set of implementations}.  For $\mcal M\in \Proc$, let
$\Th{ \mcal M}=\{ \mcal S\in \Spec\mid \mcal M\models \mcal S\}$
denote its \emph{set of theories}.  We record the following trivial
fact:

\begin{samepage}
\begin{lemma}
  For any $\mcal S\in \Spec$ and $\mcal M\in \Proc$, the following are
  equivalent:
  \begin{enumerate}[(1)]
  \item $\mcal M\models \mcal S$;
  \item $\mcal M\in \Mod{ \mcal S}$;
  \item $\mcal S\in \Th{ \mcal M}$.
  \end{enumerate}
\end{lemma}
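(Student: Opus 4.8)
The plan is to prove all three equivalences simply by unfolding the definitions of $\Mod{\mcal S}$ and $\Th{\mcal M}$ given immediately above the statement; there is no genuine mathematical content, only bookkeeping with set-builder notation.

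First I would establish $(1)\Leftrightarrow(2)$. By definition, $\Mod{\mcal S}=\{\mcal M'\in\Proc\mid \mcal M'\models\mcal S\}$, so the assertion $\mcal M\in\Mod{\mcal S}$ is, by the defining property of that comprehension together with the standing assumption $\mcal M\in\Proc$, literally the assertion $\mcal M\models\mcal S$. Hence $(1)$ and $(2)$ are interderivable. Next I would establish $(1)\Leftrightarrow(3)$ in exactly the same way: $\Th{\mcal M}=\{\mcal S'\in\Spec\mid \mcal M\models\mcal S'\}$, so $\mcal S\in\Th{\mcal M}$ unfolds, using $\mcal S\in\Spec$, to $\mcal M\models\mcal S$. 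Chaining the two bi-implications then yields $(2)\Leftrightarrow(3)$ and closes the argument.

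The only obstacle, such as it is, is purely notational: one must keep track of which of the two comprehensions ranges over $\Proc$ and which over $\Spec$, and recall the standing assumptions $\mcal M\in\Proc$ and $\mcal S\in\Spec$ so that membership in the respective ambient set does not show up as an extra side condition. There is no induction, no case analysis, and nothing to optimise; the lemma is recorded only to fix notation and to make the interchangeability of the three phrasings explicit for later use.
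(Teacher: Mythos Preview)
Your proposal is correct and matches the paper's treatment: the paper introduces the lemma as a ``trivial fact'' and gives no proof at all, so your unfolding of the two comprehensions is exactly the expected (and only possible) argument.
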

\end{samepage}

\begin{blexample}
  A common type of models is given by \emph{labeled transition
    systems} ($\LTS$).  These are structures $\mcal M=( S, s^0, T)$
  consisting of a finite set of states $S$, an initial state
  $s^0\in S$, and transitions $T\subseteq S\times \Sigma\times S$
  labeled with symbols from a fixed finite set $\Sigma$.

  $\LTS$ are often considered modulo \emph{bisimilarity}: A
  \emph{bisimulation} between two $\LTS$ $\mcal M_1=( S_1, s^0_1, T_1)$
  and $\mcal M_2=( S_2, s^0_2, T_2)$ is a relation
  $R\subseteq S_1\times S_2$ such that $( s^0_1, s^0_2)\in R$ and for
  any $( s_1, s_2)\in R$,
  \begin{enumerate}[(1)]
  \item for all $( s_1, a, s_1')\in T_1$ there exists $( s_2, a,
    s_2')\in T_2$ such that $( s_1', s_2')\in R$;
  \item for all $( s_2, a, s_2')\in T_2$ there exists
    $( s_1, a, s_1')\in T_1$ such that $( s_1', s_2')\in R$;
  \end{enumerate}
  and then $\mcal M_1$ and $\mcal M_2$ are said to be \emph{bisimilar}
  if there exists a bisimulation between them.

  A common specification formalism for $\LTS$ is \emph{Hennessy-Milner
    logic}~\cite{DBLP:journals/jacm/HennessyM85}.  It consists of
  formulae generated by the abstract syntax
  \begin{equation*}
    \HML\ni \phi, \psi\Coloneqq \ltrue\mid \lfalse\mid \phi\land
    \psi\mid \phi\lor \psi\mid \langle a\rangle \phi\mid[ a] \phi \quad(
    a\in \Sigma)\,,
  \end{equation*}
  with semantics defined by $\Mod \ltrue= \LTS$,
  $\Mod \lfalse= \emptyset$,
  $\Mod{ \phi\land \psi}= \Mod \phi\cap \Mod \psi$,
  $\Mod{ \phi\lor \psi}= \Mod \phi\cup \Mod \psi$, and
  \begin{align*}
    &\Mod{ \langle a\rangle \phi}=\{( S, s^0, T)\in \LTS\mid \exists(
    s^0, a, s)\in T:( S, s, T)\in \Mod \phi\}\,; \\
    &\Mod{ [ a] \phi}=\{( S, s^0, T)\in \LTS\mid \forall( s^0, a,
    s)\in T:( S, s, T)\in \Mod \phi\}\,.
  \end{align*}
  The Hennessy-Milner theorem~\cite{DBLP:journals/jacm/HennessyM85}
  then states that $\HML$ specifies $\LTS$ up to bisimilarity, that
  is, $\mcal M_1\sim \mcal M_2$ precisely when
  $\Th{ \mcal M_1}= \Th{ \mcal M_2}$. \qed
\end{blexample}

\begin{definition}[\cite{DBLP:journals/jacm/HennessyM85}]
  $( \Spec, \mathord{\models})$ is \emph{adequate} for
  $( \Proc, \mathord{\sim})$ if it holds for any
  $\mcal M_1, \mcal M_2\in \Proc$ that $\mcal M_1\sim \mcal M_2$ iff
  $\Th{ \mcal M_1}= \Th{ \mcal M_2}$.
\end{definition}

\section{Characteristic Formulae}

Let $\mcal M\in \Proc$.  A specification $\mcal S\in \Spec$ is a
\emph{characteristic formula for
  $\mcal M$}~\cite{DBLP:conf/icalp/Pnueli85} if it holds for any
$\mcal M'\in \Proc$ that $\mcal M'\models \mcal S$ iff
$\Th{ \mcal M'}= \Th{ \mcal M}$.

\begin{lemma}
  If $\mcal S_1, \mcal S_2\in \Spec$ are characteristic formulae for
  $\mcal M\in \Proc$, then $\Mod{ \mcal S_1}= \Mod{ \mcal S_2}$.
\end{lemma}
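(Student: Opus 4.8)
The plan is to unfold both definitions and chain the resulting biconditionals. Fix a model $\mcal M$ and suppose $\mcal S_1$ and $\mcal S_2$ are both characteristic formulae for it. The goal is to show $\Mod{\mcal S_1}=\Mod{\mcal S_2}$, i.e.\ that for an arbitrary $\mcal M'\in\Proc$ we have $\mcal M'\models\mcal S_1$ iff $\mcal M'\models\mcal S_2$.

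First I would invoke the defining property of a characteristic formula applied to $\mcal S_1$: for any $\mcal M'\in\Proc$, $\mcal M'\models\mcal S_1$ iff $\Th{\mcal M'}=\Th{\mcal M}$. Then I would do the same with $\mcal S_2$: $\mcal M'\models\mcal S_2$ iff $\Th{\mcal M'}=\Th{\mcal M}$. Since both $\mcal M'\models\mcal S_1$ and $\mcal M'\models\mcal S_2$ are equivalent to the very same condition $\Th{\mcal M'}=\Th{\mcal M}$, they are equivalent to each other. As $\mcal M'$ was arbitrary, $\Mod{\mcal S_1}$ and $\Mod{\mcal S_2}$ contain exactly the same models, hence are equal. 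One may also phrase this via Lemma~1 (the trivial equivalence of $\mcal M'\models\mcal S$, $\mcal M'\in\Mod{\mcal S}$, and $\mcal S\in\Th{\mcal M'}$) to move freely between the membership formulations, but this is cosmetic.

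I do not expect any genuine obstacle here: the statement is a direct consequence of the definition, since "characteristic formula for $\mcal M$" pins down the set of implementations to be exactly $\{\mcal M'\mid\Th{\mcal M'}=\Th{\mcal M}\}$, and two specifications whose implementation sets are each described by the same predicate must have equal implementation sets. The only thing to be careful about is purely notational bookkeeping: keeping the quantifier over $\mcal M'$ outermost and not conflating $\mcal M$ (the fixed model the formulae characterize) with $\mcal M'$ (the varying test model). No appeal to adequacy, to the structure of $\LTS$, or to anything beyond the definition on the preceding lines is needed.
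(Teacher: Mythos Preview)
Your proposal is correct and follows exactly the same approach as the paper: for arbitrary $\mcal M'$, chain the biconditionals $\mcal M'\in\Mod{\mcal S_1}$ iff $\Th{\mcal M'}=\Th{\mcal M}$ iff $\mcal M'\in\Mod{\mcal S_2}$. The paper's proof is simply the one-line compressed version of what you wrote.
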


\begin{proof}
  For any $\mcal M'\in \Proc$, $\mcal M'\in \Mod{ \mcal S_1}$ iff
  $\Th{ \mcal M'}= \Th{ \mcal M}$, iff $\mcal M'\in \Mod{ \mcal
    S_2}$. \qed
\end{proof}

\begin{definition}[\cite{DBLP:conf/icalp/Pnueli85}]
  $( \Spec, \mathord{\models})$ is \emph{expressive} for
  $( \Proc, \mathord{\sim})$ if every $\mcal M\in \Proc$ admits a
  characteristic formula.
\end{definition}

\begin{blexample}
  It is known~\cite{books/AcetoILS07} that $\HML$ is \emph{not}
  expressive for $\LTS$ with bisimilarity.  Indeed, the simple
  transition system $(\{ s^0\}, s^0,( s^0, a, s^0)\}$ consisting only
  of a loop at the initial state does not admit a characteristic
  formula in $\HML$.

  The standard remedy~\cite{DBLP:journals/tcs/Larsen90} for this
  expressivity failure is to add recursion and maximal fixed points to
  the logic.  For a finite set $X$ of variables, let $\HML( X)$ be the
  set of formulae generated as follows:
  \begin{equation*}
    \HML( X)\ni \phi, \psi\Coloneqq \ltrue\mid \lfalse\mid \phi\land
    \psi\mid \phi\lor \psi\mid \langle a\rangle \phi\mid[ a] \phi \mid x
    \quad( a\in \Sigma, x\in X)
  \end{equation*}
  That is, $\HML( X)$ formulae are $\HML$ formulae which additionally
  may contain variables from $X$.

  A \emph{recursive Hennessy-Milner
    formula}~\cite{DBLP:journals/iandc/BenesFKLT20,
    DBLP:journals/tcs/Larsen90} is a tuple $\mcal H=( X, X^0, \Delta)$
  consisting of finite sets $X\supseteq X^0$ of \emph{variables} and
  \emph{initial} variables, respectively, and a \emph{declaration}
  $\Delta: X\to \HML( X)$.  The set of such formulae is denoted
  $\HMLR$.  The semantics of a formula $\mcal H\in \HMLR$ is a set
  $\sem{ \mcal H}\in \LTS$ which is defined as a maximal fixed point,
  see~\cite{DBLP:journals/tcs/Larsen90, books/AcetoILS07}; we do not
  go into these details here because we will give another, equivalent,
  semantics below.

  The \emph{characteristic formula}~\cite{DBLP:journals/tcs/Larsen90}
  of $( S, s^0, T)\in \LTS$ is now the $\HMLR$ formula
  $( S,\{ s^0\}, \Delta)$ given by
  \begin{equation*}
    \Delta( s)= \bigland_{( s, a, t)\in T} \langle a\rangle t\land
    \bigland_{ a\in \Sigma}[ a]\Big( \biglor_{( s, a, t)\in T)}
    t\Big)\,.
  \end{equation*}
  Note how $\Delta( s)$ precisely specifies all labels which must be
  available from $s$ (the first part of the conjunction) and, for each
  label, which properties must be satisfied after its occurrence (the
  second part of the conjunction).  \qed
\end{blexample}

\section{Specification Theories}

\begin{definition}[\cite{DBLP:journals/jlap/FahrenbergL20}]
  A \emph{behavioral specification theory} for
  $( \Proc, \mathord{\sim})$ consists of a~set\/ $\Spec$ of
  specifications, a relation
  $\mathord{\models}\subseteq \Proc\times \Spec$, a~mapping
  $\chi: \Proc\to \Spec$, and a~preorder $\le$ on $\Spec$, called
  \emph{refinement}, subject to the following conditions:
  \begin{enumerate}[(1)]
  \item \label{en:spt.ad} $( \Spec, \mathord{\models})$ is adequate
    for $( \Proc, \mathord{\sim})$;
  \item \label{en:spt.ch} for every $\mcal M\in \Proc$,
    $\chi( \mcal M)$ is a characteristic formula for $\mcal M$;
  \item \label{en:spt.mr} for all $\mcal M\in \Proc$ and all
    $\mcal S\in \Spec$, $\mcal M\models \mcal S$ iff
    $\chi( \mcal M)\le \mcal S$.
  \end{enumerate}
\end{definition}

We will generally omit ``behavioral'' from now and only speak about
\emph{specification theories}.

\pagebreak

The equivalence relation $\mathord{ \equiv}$ on $\Spec$ defined as
$\mathord{ \le}\cap \mathord{ \ge}$ is called \emph{modal
  equivalence}.  Some comments on the different ingredients above are
in order.
\begin{enumerate}
\item By~\eqref{en:spt.ch}, $( \Spec, \mathord{\models})$ is also
  \emph{expressive} for $( \Proc, \mathord{\sim})$.
\item $\chi$ is a \emph{section} of $\models$: for all
  $\mcal M\in \Proc$, $\mcal M\models \chi( \mcal M)$.
\item \eqref{en:spt.mr} can be seen as \emph{defining} $\models$, so
  we may omit $\models$ from the signature of specification theories.
\item For any $\mcal M\in \Proc$,
  $\Th{ \mcal M}=\{ \mcal S\in \Spec\mid \chi( \mcal M)\le \mcal S\}=
  \chi( \mcal M)\mathord\uparrow$ is the \emph{upward closure} of
  $\chi( \mcal M)$ with respect to $\le$.
\end{enumerate}

\begin{lemma}[\cite{DBLP:journals/jlap/FahrenbergL20}]
  Let $( \Spec, \chi, \mathord{\le})$ be a specification theory for
  $( \Proc, \mathord{\sim})$.
  \begin{enumerate}[(1)]
  \item For all $\mcal S_1, \mcal S_2\in \Spec$,
    $\mcal S_1\le \mcal S_2$ implies $\Mod{ \mcal S_1}\subseteq \Mod{
      \mcal S_2}$.
  \item For all $\mcal M_1, \mcal M_2\in \Proc$,
    $\mcal M_1\sim \mcal M_2$ iff
    $\chi( \mcal M_1)\le \chi( \mcal M_2)$.
  \end{enumerate}
\end{lemma}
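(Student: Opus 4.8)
The plan is to reduce everything to the three defining conditions of a specification theory, using the model--refinement condition~\eqref{en:spt.mr} as the bridge between the semantic relation~$\models$ and the syntactic preorder~$\le$.

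For part~(1) I would take an arbitrary $\mcal M\in \Mod{ \mcal S_1}$, i.e.\ $\mcal M\models \mcal S_1$. By~\eqref{en:spt.mr} this is the same as $\chi( \mcal M)\le \mcal S_1$, and composing with the hypothesis $\mcal S_1\le \mcal S_2$ through transitivity of the preorder~$\le$ gives $\chi( \mcal M)\le \mcal S_2$. Applying~\eqref{en:spt.mr} once more, now in the converse direction, yields $\mcal M\models \mcal S_2$, that is, $\mcal M\in \Mod{ \mcal S_2}$. This is entirely routine; besides~\eqref{en:spt.mr} the only fact used is that $\le$, being a preorder, is transitive.

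For part~(2) the key observation is that $\chi( \mcal M_2)$ is a characteristic formula for~$\mcal M_2$ by~\eqref{en:spt.ch}, so $\mcal M_1\models \chi( \mcal M_2)$ is equivalent to $\Th{ \mcal M_1}= \Th{ \mcal M_2}$; and by adequacy~\eqref{en:spt.ad} the latter is in turn equivalent to $\mcal M_1\sim \mcal M_2$. Chaining these with~\eqref{en:spt.mr}, which gives $\mcal M_1\models \chi( \mcal M_2)$ iff $\chi( \mcal M_1)\le \chi( \mcal M_2)$, produces
\begin{align*}
  \mcal M_1\sim \mcal M_2
  \ &\Longleftrightarrow\ \Th{ \mcal M_1}= \Th{ \mcal M_2}
  \ \Longleftrightarrow\ \mcal M_1\models \chi( \mcal M_2) \\
  &\Longleftrightarrow\ \chi( \mcal M_1)\le \chi( \mcal M_2)\,,
\end{align*}
which is exactly the claim; no direction of the biconditional is harder than the others, since the whole statement is just a concatenation of three equivalences.

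I expect no real obstacle: both items are immediate consequences of the axioms. The only points requiring a little care are to invoke transitivity of~$\le$ in part~(1), and in part~(2) to apply the characteristic-formula property to $\chi( \mcal M_2)$ (rather than $\chi( \mcal M_1)$), so that the comparison $\Th{ \mcal M_1}= \Th{ \mcal M_2}$ comes out in the right orientation. As an alternative one could derive part~(2) by using the forward direction twice, on $\mcal M_1\sim \mcal M_2$ and on $\mcal M_2\sim \mcal M_1$, to obtain $\chi( \mcal M_1)\equiv \chi( \mcal M_2)$; but the direct chain above is the cleanest route.
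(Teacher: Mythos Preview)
Your proof is correct and follows essentially the same route as the paper: part~(1) is handled by translating $\mcal M\in\Mod{\mcal S_1}$ into $\chi(\mcal M)\le\mcal S_1$ via~\eqref{en:spt.mr}, chaining with $\mcal S_1\le\mcal S_2$, and translating back; part~(2) is the same chain of equivalences through the characteristic-formula property of $\chi(\mcal M_2)$ and~\eqref{en:spt.mr}. The only difference is cosmetic: you make the appeal to adequacy~\eqref{en:spt.ad} explicit in the middle step, whereas the paper folds it into the invocation of the characteristic-formula property.
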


\begin{proof}
  For the first claim, $\mcal M\in \Mod{ \mcal S_1}$ implies
  $\chi( \mcal M)\le \mcal S_1\le \mcal S_2$, hence $\mcal M\in \Mod{
    \mcal S_2}$.

  For the second claim, we have $\mcal M_1\sim \mcal M_2$ iff
  $\mcal M_1\models \chi( \mcal M_2)$ (as $\chi( \mcal M_2)$ is
  characteristic for $\mcal M_2$), iff
  $\chi( \mcal M_1)\le \chi( \mcal M_2)$ by~\eqref{en:spt.mr}. \qed
\end{proof}

\begin{blexample}
  \cite{DBLP:journals/iandc/BenesFKLT20}~introduces a \emph{normal
    form} for $\HMLR$ formulae, showing that for any $\HMLR$ formula
  $\mcal H_1=( X_1, X^0_1, \Delta_1)$, there exists another formula
  $\mcal H_2=( X_2, X^0_2, \Delta_2)$ with
  $\sem{ \mcal H_1}= \sem{ \mcal H_2}$ and such that for any $x\in X_2$,
  $\Delta_2( x)$ is of the form
  \begin{equation*}
    \Delta_2( x)= \bigland_{ N\in
      \Diamond(x)} \Big( \biglor_{( a, y)\in N} \langle a\rangle y\Big)
    \land \bigland_{ a\in \Sigma}[ a]\Big( \biglor_{ y\in \Box^a( x)}
    y\Big)\,,
  \end{equation*}
  for finite sets $\Diamond( x)\subseteq 2^{ \Sigma\times X_2}$ and,
  for each $a\in \Sigma$, $\Box^a( x)\subseteq X_2$.  This may be seen
  as generalizing the characteristic formulae of $\HMLR$: the first
  part of the conjunction in $\Delta_2( x)$ specifies all labels which
  must be available, and the second part, which properties must be
  satisfied after each label's occurrence.

  A \emph{refinement}~\cite{DBLP:journals/iandc/BenesFKLT20} of two
  $\HMLR$ formulae $\mcal H_1=( X_1, X^0_1, \Delta_1)$ and
  $\mcal H_2=( X_2, X^0_2, \Delta_2)$ in normal form is a relation
  $R\subseteq X_1\times X_2$ such that for every $x^0_1\in X^0_1$
  there exists $x^0_2\in X^0_2$ for which $( x^0_1, x^0_2)\in R$, and
  for any $( x_1, x_2)\in R$,
  \begin{enumerate}[(1)]
  \item for all $N_2\in \Diamond_2( x_2)$ there is $N_1\in \Diamond_1(
    x_1)$ such that for each $( a, y_1)\in N_1$, there exists $( a,
    y_2)\in N_2$ with $( y_1, y_2)\in R$;
  \item for all $a\in \Sigma$ and every $y_1\in
    \Box^a_1( x_1)$, there is $y_2\in \Box^a_2( x_2)$ for which $( y_1,
    y_2)\in R$.
  \end{enumerate}
  Note how this corresponds to the intuition for the normal form above.

  Writing $\mcal H_1\le \mcal H_2$ whenever there exists a refinement
  as above, and denoting by $\chi( \mcal M)$ the characteristic
  formula of $\mcal M\in \LTS$ introduced in the previous example, it
  can be shown~\cite{DBLP:journals/iandc/BenesFKLT20} that
  $( \HMLR, \chi, \mathord{\le})$ is a specification theory for $\LTS$
  under bisimulation.  This also implies that the refinement semantics
  of $\HMLR$ agrees with the standard fixed-point
  semantics~\cite{DBLP:journals/tcs/Larsen90, books/AcetoILS07}.  \qed
\end{blexample}

\begin{blexample}
  \cite{DBLP:journals/iandc/BenesFKLT20}~exposes structural
  translations between $\HMLR$ and two other specification formalism:
  a generalization of the \emph{disjunctive modal transition systems}
  ($\DMTS$) introduced in~\cite{DBLP:conf/lics/LarsenX90} to multiple
  initial states, and a non-deterministic version of the
  \emph{acceptance automata} ($\NAA$)
  of~\cite{DBLP:journals/entcs/Raclet08, DBLP:conf/ictac/CaillaudR12}.
  This yields two other specification theories for $\LTS$ under
  bisimulation, one $\DMTS$-based and one based on (non-deterministic)
  acceptance automata.  \qed
\end{blexample}

\begin{blexample}
  \cite{DBLP:journals/jlap/FahrenbergL20}~introduces $\DMTS$-based
  specification theories for $( \LTS, \mathord{\cong})$, where $\cong$
  is any equivalence in van~Glabbeek's linear-time--branching-time
  spectrum~\cite{inbook/hpa/Glabbeek01}.  Using the translations
  mentioned in the previous example, these also give rise to
  $\HMLR$-based specification theories, and to specification theories
  based on acceptance automata, for all those equivalences. \qed
\end{blexample}

\section{Logical Operations on Specifications}

Behavioral specifications typically come equipped with logical
operations of conjunction and disjunction.  Recall that
$\mathord{ \equiv}$ is defined as $\mathord{ \le}\cap \mathord{ \ge}$.

\begin{definition}
  A specification theory $( \Spec, \chi, \mathord{\le})$ for
  $( \Proc, \mathord{\sim})$ is \emph{logical} if $( \Spec,
  \mathord{\le})$ forms a bounded distributive lattice up to
  $\equiv$.
\end{definition}

The above implies that $\Spec$ admits commutative and associative
binary operations $\lor$ of least upper bound and $\land$ of greatest
lower bound: disjunction and conjunction.  It also entails that there
is a bottom specification $\lfalse\in \Spec$, satisfying
$\Mod{ \lfalse}= \emptyset$, and a top specification
$\ltrue\in \Spec$, satisfying $\Mod{ \ltrue}= \Proc$.  We sum up the
properties of these operations:
\begin{gather}
  \label{eq:lub} \mcal S_1\lor \mcal S_2\le \mcal S_3
  \quad\text{iff}\quad \mcal
  S_1\le \mcal S_3 \text{ and } \mcal S_2\le \mcal S_3 \\
  \label{eq:glb} \mcal S_1\le \mcal S_2\land \mcal S_3
  \quad\text{iff}\quad \mcal
  S_1\le \mcal S_2 \text{ and } \mcal S_1\le \mcal S_3 \\
  \notag \mcal S_1\land( \mcal S_2\lor \mcal S_3)\equiv( \mcal
  S_1\land \mcal
  S_2)\lor( \mcal S_1\land \mcal S_3) \\
  \notag \mcal S_1\lor( \mcal S_2\land \mcal S_3)\equiv( \mcal S_1\lor
  \mcal
  S_2)\land( \mcal S_1\lor \mcal S_3) \\
  \notag \lfalse \land \mcal S\equiv \lfalse \qquad \ltrue \land \mcal
  S\equiv \mcal S \\
  \notag \lfalse \lor \mcal S\equiv \mcal S \qquad \ltrue \lor \mcal
  S\equiv \ltrue
\end{gather}
Note that the properties of least upper bound and greatest lower bound
in~\eqref{eq:lub} and~\eqref{eq:glb} above \emph{define} $\lor$ and
$\land$ uniquely: they are universal properties.

\begin{blexample}
  Hennessy-Milner logic has disjunction and conjunction as part of the
  syntax, and~\cite{DBLP:journals/iandc/BenesFKLT20} shows that on the
  specification theory $( \HMLR, \chi, \mathord{\le})$ from previous
  examples these are operations as above.  The disjunction of two
  $\HMLR$ formulae in normal form is again in normal form; for
  conjunction it may be defined directly on normal forms as follows:

  Let $\mcal H_1=( X_1, X^0_1, \Delta_1)$ and
  $\mcal H_2=( X_2, X^0_2, \Delta_2)$ be $\HMLR$ formulae in normal
  form and define $\mcal H=( X_1\times X_2, X^0_1\times X^1_1,
  \Delta)$ by $\Box^a(( x_1, x_2))= \Box^a_1( x_1)\land \Box^a_2( x_2)$
  for every $a\in \Sigma$ and $( x_1, x_2)\in X$ and
  \begin{multline*}
    \Diamond(( x_1, x_2))= \big\{ \{( a,( y_1, y_2))\mid( a, y_1)\in
    N_1, y_2\in \Box^a_2( x_2)\}\bigmid N_1\in \Diamond_2( x_1)\big\}
    \\
    \cup \big\{ \{( a,( y_1, y_2))\mid( a, y_2)\in N_2, y_1\in
    \Box^a_1( x_1)\}\bigmid N_2\in \Diamond_2( x_2)\big\}\,,
  \end{multline*}
  then $\mcal H\equiv \mcal H_1\land \mcal
  H_2$~\cite{DBLP:journals/iandc/BenesFKLT20}.

  Hence the three specification theories for
  $( \Proc, \mathord{\sim})$
  of~\cite{DBLP:journals/iandc/BenesFKLT20}: $\HMLR$, $\DMTS$, and
  $\NAA$, are all logical.  \qed
\end{blexample}

As a variation, some specification theories only admit conjunction and
no disjunction, thus forming a \emph{bounded meet-semilattice}.  We
call such specification theories \emph{semi-logical}.

\section{Structural Operations on Specifications}

Many behavioral specifications also admit structural operations of
\emph{composition}, denoted~$\|$, and \emph{quotient}, denoted~$\by$,
in order to enable compositional design and verification.

\begin{definition}
  A \emph{compositional specification theory} is a specification
  theory $( \Spec, \chi, \mathord{\le})$ for
  $( \Proc, \mathord{\sim})$ together with an operation $\|$ on
  $\Spec$ such that $( \Spec, \mathord{\|}, \mathord{\le})$ forms a
  commutative partially ordered semigroup up to $\equiv$.
\end{definition}

That is to say that the operation $\|$ is commutative and associative
and additionally satisfies the following monotonicity law:
\begin{equation*}
  \mcal S_1\le \mcal S_2\limpl \mcal S_1\| \mcal S_3\le \mcal S_2\|
  \mcal S_3
\end{equation*}
Contrary to the logical operations $\lor$ and $\land$, $\|$ is
\emph{not} defined uniquely; indeed a specification theory may admit
many different composition operations.

\begin{corollary}[Independent implementability]
  If $( \Spec, \mathord{\|}, \chi, \mathord{\le})$ is compositional,
  then $\mcal S_1\le \mcal S_3$ and $\mcal S_2\le \mcal S_4$ imply
  $\mcal S_1\| \mcal S_2\le \mcal S_3\| \mcal S_4$.
\end{corollary}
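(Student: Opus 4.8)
The plan is to derive the statement directly from the monotonicity law defining a compositional specification theory, applied twice, together with commutativity of $\|$ and transitivity of the refinement preorder $\le$. No fixed-point reasoning, characteristic formulae, or properties of $\models$ are needed; this is purely an order-theoretic consequence of the axioms of a commutative partially ordered semigroup.

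First I would instantiate the monotonicity law on the hypothesis $\mcal S_1\le \mcal S_3$, taking $\mcal S_2$ as the fixed third component, to obtain $\mcal S_1\| \mcal S_2\le \mcal S_3\| \mcal S_2$. Next I would instantiate monotonicity on the hypothesis $\mcal S_2\le \mcal S_4$, taking $\mcal S_3$ as the fixed third component, to get $\mcal S_2\| \mcal S_3\le \mcal S_4\| \mcal S_3$; using commutativity of $\|$ (which holds up to $\equiv$, and $\le$ is compatible with $\equiv$ since $\equiv\mathrel{=}\mathord\le\cap\mathord\ge$) this rewrites as $\mcal S_3\| \mcal S_2\le \mcal S_3\| \mcal S_4$. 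Finally, chaining the two inequalities $\mcal S_1\| \mcal S_2\le \mcal S_3\| \mcal S_2\le \mcal S_3\| \mcal S_4$ by transitivity of $\le$ yields the claim.

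There is essentially no real obstacle here; the only point requiring a moment's care is the book-keeping in the second step, where the monotonicity law as stated keeps the \emph{same} component fixed on both sides, so one must use commutativity to move the varying argument $\mcal S_2$ into the slot in which monotonicity can be applied. An alternative, slightly slicker route that avoids the explicit commutativity step is to observe that $\|$ being a commutative partially ordered semigroup up to $\equiv$ means monotonicity holds symmetrically in both arguments, so one can apply it once in each coordinate and conclude immediately.
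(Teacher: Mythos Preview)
Your proof is correct and follows exactly the paper's approach: two applications of monotonicity yielding the chain $\mcal S_1\| \mcal S_2\le \mcal S_3\| \mcal S_2\le \mcal S_3\| \mcal S_4$. The paper's proof is just this one line, leaving the commutativity bookkeeping you spell out implicit.
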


\begin{proof}
  By monotonicity, $\mcal S_1\| \mcal S_2\le \mcal S_3\| \mcal S_2\le
  \mcal S_3\| \mcal S_4$. \qed
\end{proof}

Note that independent implementability also \emph{implies} the
monotonicity law above.

If $( \Spec, \mathord{\|}, \chi, \mathord{\le})$ is
\emph{compositional and logical}, then it is called a
\emph{lattice-ordered semigroup} (up to $\equiv$) as an algebraic
structure; more precisely a bounded distributive lattice-ordered
commutative semigroup.  This entails that composition distributes over
disjunction:
\begin{equation*}
  \mcal S_1\|( \mcal S_2\lor \mcal S_2)\equiv \mcal S_1\| \mcal
  S_2\lor \mcal S_1\| \mcal S_3
\end{equation*}
Note that composition does \emph{not} necessarily distributed over
\emph{conjunction}.

If composition $\|$ also admits a unit $\one\in \Spec$ (up to
$\equiv$), \ie~such that $\mcal S\| \one\equiv \mcal S$ for all
$\mcal S\in \Spec$, then $( \Spec, \mathord{\|}, \chi, \mathord{\le})$
is said to be \emph{unital}, and ``semigroup'' is replaced by
``monoid'' above.

\begin{definition}
  A compositional specification theory
  $( \Spec, \mathord{\|}, \chi, \mathord{\le})$ for
  $( \Proc, \mathord{\sim})$ is \emph{complete} if\/
  $( \Spec, \mathord{\|}, \mathord{\le})$ forms a \emph{residuated}
  partially ordered commutative semigroup up to~$\equiv$.
\end{definition}

That is, the operation $\|$ admits a \emph{residual} $\by$, in our
context called \emph{quotient}, satisfying the following property:
\begin{equation}
  \label{eq:quotient}
  \mcal S_1\| \mcal S_2\le \mcal S_3 \liff \mcal S_2\le \mcal S_3\by
  \mcal S_1
\end{equation}
This property is again universal, so that $\by$ is uniquely defined by
$\|$.

If $( \Spec, \mathord{\|}, \chi, \mathord{\le})$ is also unital, then
it forms a \emph{residuated poset} up to~$\equiv$.  In that case, the
following holds for all $\mcal S_1, \mcal S_2\in \Spec$:
\begin{equation*}
  \mcal S_1\|( \one\by \mcal S_2)\le \mcal S_1\by \mcal S_2  
\end{equation*}
We refer to~\cite{JipsenT02} for a survey on residuated posets and the
residuated lattices we will encounter in a moment; we only highlight a
few properties here.

\begin{lemma}[\cite{JipsenT02}]
  The following hold in any complete compositional specification
  theory:
  \begin{align*}
    \mcal S_1\|( \mcal S_2\by \mcal S_3) &\le ( \mcal S_1\| \mcal
    S_2)\by \mcal S_3 &
    \mcal S_1\by \mcal S_2 &\le ( \mcal S_1\| \mcal S_3)\by( \mcal
    S_2\| \mcal S_3) \\
    ( \mcal S_1\by \mcal S_2)\|( \mcal S_2\by \mcal S_3) &\le \mcal
    S_1\by \mcal S_3 &
    ( \mcal S_1\by \mcal S_2)\by \mcal S_3 &\equiv ( \mcal S_1\by
    \mcal S_3)\by \mcal S_2 \\
    \mcal S_1\by( \mcal S_2\| \mcal S_3) &\equiv ( \mcal S_1\by \mcal
    S_2)\by \mcal S_3 &
    \mcal S\|( \mcal S\by \mcal S) &\equiv \mcal S \\
    ( \mcal S\by \mcal S)\|( \mcal S\by \mcal S) &\equiv \mcal S\by \mcal S
  \end{align*}
\end{lemma}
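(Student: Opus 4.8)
The plan is to derive every line from the single adjunction \eqref{eq:quotient}, $\mcal S_1\|\mcal S_2\le\mcal S_3\liff\mcal S_2\le\mcal S_3\by\mcal S_1$, together with commutativity, associativity and monotonicity of~$\|$. The one auxiliary fact I would record at the outset is the \emph{counit} inequality $(\mcal S_1\by\mcal S_2)\|\mcal S_2\le\mcal S_1$; this is just \eqref{eq:quotient} read on the reflexivity instance $\mcal S_1\by\mcal S_2\le\mcal S_1\by\mcal S_2$ (note that the factor being composed sits in the second slot of $\by$, matching the shape $\mcal S_3\by\mcal S_1$ versus $\|\mcal S_1$ in \eqref{eq:quotient}). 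Everything else is assembling counit applications around reassociations of~$\|$.

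For the three inequality statements I would use a uniform recipe: to prove $L\le R\by T$, residuate by \eqref{eq:quotient} and instead prove $L\|T\le R$; then reassociate and commute $\|$ so that a subterm of the form $(\mcal S\by\mcal S')\|\mcal S'$ is exposed, collapse it to $\mcal S$ by the counit, and close by monotonicity of~$\|$. Thus $\bigl(\mcal S_1\|(\mcal S_2\by\mcal S_3)\bigr)\|\mcal S_3\equiv\mcal S_1\|\bigl((\mcal S_2\by\mcal S_3)\|\mcal S_3\bigr)\le\mcal S_1\|\mcal S_2$ gives the first; $(\mcal S_1\by\mcal S_2)\|(\mcal S_2\|\mcal S_3)\equiv\bigl((\mcal S_1\by\mcal S_2)\|\mcal S_2\bigr)\|\mcal S_3\le\mcal S_1\|\mcal S_3$ gives the second; and $\bigl((\mcal S_1\by\mcal S_2)\|(\mcal S_2\by\mcal S_3)\bigr)\|\mcal S_3\equiv(\mcal S_1\by\mcal S_2)\|\bigl((\mcal S_2\by\mcal S_3)\|\mcal S_3\bigr)\le(\mcal S_1\by\mcal S_2)\|\mcal S_2\le\mcal S_1$ gives the third. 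For the two ``currying'' equivalences $\mcal S_1\by(\mcal S_2\|\mcal S_3)\equiv(\mcal S_1\by\mcal S_2)\by\mcal S_3$ and $(\mcal S_1\by\mcal S_2)\by\mcal S_3\equiv(\mcal S_1\by\mcal S_3)\by\mcal S_2$, I would observe that all four expressions involved satisfy the \emph{same} universal property: $A$ lies below any of them iff $A\|(\mcal S_2\|\mcal S_3)\le\mcal S_1$ (apply \eqref{eq:quotient} once or twice and reassociate), and commutativity of~$\|$ makes $\mcal S_2\|\mcal S_3$ symmetric; since \eqref{eq:quotient} pins down each residual uniquely up to~$\equiv$, both equivalences follow at once.

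The last two identities, $\mcal S\|(\mcal S\by\mcal S)\equiv\mcal S$ and $(\mcal S\by\mcal S)\|(\mcal S\by\mcal S)\equiv\mcal S\by\mcal S$, split into a $\le$ half and a $\ge$ half. The $\le$ halves are again counit plus monotonicity: directly for the first, and for the second via $(\mcal S\by\mcal S)\|(\mcal S\by\mcal S)\|\mcal S\le(\mcal S\by\mcal S)\|\mcal S\le\mcal S$ followed by residuation. The $\ge$ halves rest on the single extra observation $\one\le\mcal S\by\mcal S$, which holds because $\one\|\mcal S\equiv\mcal S\le\mcal S$ residuates to $\one\le\mcal S\by\mcal S$; then $\mcal S\equiv\mcal S\|\one\le\mcal S\|(\mcal S\by\mcal S)$ and, likewise, $\mcal S\by\mcal S\equiv(\mcal S\by\mcal S)\|\one\le(\mcal S\by\mcal S)\|(\mcal S\by\mcal S)$. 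The main subtlety to flag is exactly here: these two identities genuinely require a composition unit and so are to be read in the unital (residuated poset/lattice) setting of~\cite{JipsenT02}, whereas the first five lines hold in any complete compositional specification theory; the only other thing to watch is keeping the sides of \eqref{eq:quotient} straight, but that is fixed once the counit lemma is stated.
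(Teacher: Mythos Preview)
Your argument is correct in every detail, and in fact the paper itself gives no proof of this lemma at all: it is simply cited from~\cite{JipsenT02}, so there is nothing to compare against on that front.

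One point deserves emphasis, because you have caught something the paper glosses over. The lemma is stated for ``any complete compositional specification theory'', which in the paper's terminology means a residuated partially ordered commutative \emph{semigroup} up to~$\equiv$, with no unit assumed. Your derivations of the first five properties use only the adjunction~\eqref{eq:quotient}, associativity, commutativity, and monotonicity of~$\|$, and are valid in that generality. But you are right that the $\ge$ directions of $\mcal S\|(\mcal S\by\mcal S)\equiv\mcal S$ and $(\mcal S\by\mcal S)\|(\mcal S\by\mcal S)\equiv\mcal S\by\mcal S$ genuinely require a unit via $\one\le\mcal S\by\mcal S$. Indeed, the two-element poset $\{a<b\}$ with constant composition $x\|y=a$ is a residuated commutative semigroup (every residual equals~$b$) in which $b\|(b\by b)=a\not\equiv b$, so the last two identities fail without unitality. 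Your flag that these two lines should be read in the unital setting of~\cite{JipsenT02} is therefore not a mere caveat but a necessary correction to the lemma as stated.
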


If $( \Spec, \mathord{\|}, \one, \chi, \mathord{\le})$ is
\emph{complete compositional and logical}, then it is called a
\emph{residuated lattice-ordered semigroup} (up to $\equiv$); more
precisely a bounded distributive residuated
lattice-ordered commutative semigroup.  Distributivity of composition over
disjunction now follows from residuation, and also the quotient is
well-behaved with respect to the logical operations:
\begin{gather*}
  ( \mcal S_1\land \mcal S_2)\by \mcal S_3\equiv \mcal S_1\by \mcal
  S_3\land \mcal S_2\by \mcal S_3 \qquad \mcal S_1\by( \mcal S_2\lor
  \mcal S_3)\equiv \mcal S_1\by \mcal S_2\land \mcal S_1\by \mcal S_3
\end{gather*}
Additionally, composition and quotient interact with the bottom and
top elements as follows:
\begin{equation*}
  \mcal S\| \lfalse\equiv \lfalse \qquad \mcal S\by \lfalse\equiv
  \ltrue \qquad \ltrue\by \mcal S\equiv \ltrue
\end{equation*}

Finally, if $( \Spec, \mathord{\|}, \one, \chi, \mathord{\le})$ is
complete compositional, unital, and logical, then it is called a
\emph{residuated lattice}.  We sum up the different algebraic
structures we have encountered in Fig.~\ref{fi:alg}.

\begin{figure}[tb]
  \centering
  \hspace*{-2ex}
  \begin{tikzpicture}[x=1.2cm]
    \tikzstyle{every node}=[font=\small,text badly centered]
    \begin{scope}
      \node (log) at (0,0) {logical};
      \node (com) at (-2,-1) {compositional\vphantom{g}};
      \node (cal) at (0,-2) {comp.\ \& log.};
      \node (uco) at (-3,-3) {\;\;unital compositional\vphantom{g}};
      \node (ucl) at (-1,-4) {\;\;\;uni.\ comp.\ \& log.};
      \node (cco) at (-2,-5) {complete comp.};
      \node (ucc) at (-3,-7) {uni.\ complete comp.};
      \node (ccl) at (0,-6) {complete comp.\ \& log.};
      \node (uccl) at (-1,-8) {uni.\ complete comp.\ \& log.};
    \end{scope}
    \begin{scope}[xshift=5cm]
      \node (log1) at (0,0) {b.d.\ lattice\vphantom{g}};
      \node (com1) at (-2,-1) {po.c.\ semigroup};
      \node (cal1) at (0,-2) {b.d.lo.c.\ semigroup};
      \node (uco1) at (-3,-3) {po.c.\ monoid};
      \node (ucl1) at (-1,-4) {b.d.lo.c.\ monoid\vphantom{g}};
      \node (cco1) at (-2,-5) {\;residuated po.c.\ semigroup};
      \node (ucc1) at (-3,-7) {\,c.\ residuated poset};
      \node (ccl1) at (0,-6) {b.d.\ residuated lo.c.\ semigroup};
      \node (uccl1) at (-1,-8) {b.d.c.\ residuated lattice\vphantom{g}};
    \end{scope}
    \foreach \n in {log, com, cal, uco, ucl, cco, ucc, ccl, uccl} {%
      \path (\n) edge[-, blue] (\n1);
    }
    \foreach \n/\m in {log/cal, com/cal, com/uco, uco/ucl, cal/ucl,
      com/cco, cco/ucc, uco/ucc, cal/ccl, cco/ccl, ucl/uccl, ccl/uccl,
      ucc/uccl} {%
      \path (\n) edge[<-, black!50] (\m);
      \path (\n1) edge[<-, black!50] (\m1);
    }
  \end{tikzpicture}
  \caption{Spectrum of specification theories and the corresponding
    algebraic structures.  Abbreviations: b.---bounded;
    d.---distributive; c.---commutative; po.---partially ordered;
    lo.---lattice-ordered 
    }
  \label{fi:alg}
\end{figure}
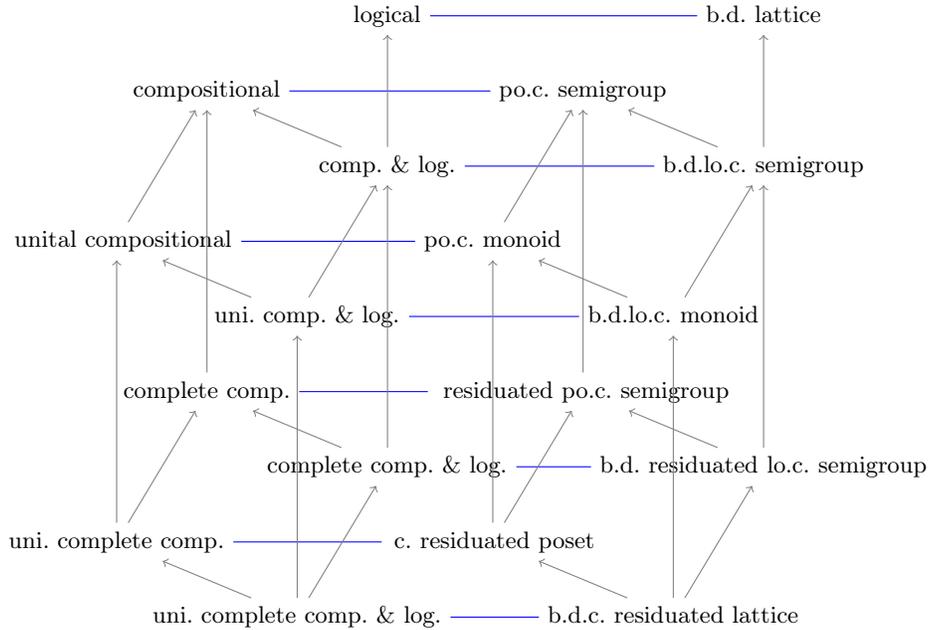

\begin{blexample}
  In~\cite{DBLP:journals/iandc/BenesFKLT20} it is shown that the
  specification theory $( \HMLR, \chi, \mathord{\le})$, and thus also
  the specification theories based on $\DMTS$ and $\NAA$, are unital
  complete compositional when enriched with CSP-style composition
  $\|$.  (In~\cite{DBLP:journals/soco/FahrenbergKLT18} this is
  generalized to other types of composition.)

  \begin{samepage}
    The composition $\mcal H_1\| \mcal H_2$ is defined by translation
    between $\HMLR$ and $\NAA$.  Also quotient is defined through
    $\NAA$, and it is shown in~\cite{DBLP:journals/iandc/BenesFKLT20}
    that due to these translations, composition may incur an
    exponential blow-up and quotient a double-exponential blow-up.
    \qed
  \end{samepage}
\end{blexample}

\section{Specification Theories for Real-Time and Probabilistic Systems}

We quickly survey a few different specification theories for real-time
and probabilistic systems.

\subsection{Modal event-clock specifications}

Modal event-clock specifications ($\MECS$) were introduced
in~\cite{DBLP:journals/scp/BertrandLPR12}.  They form a specification
theory for event-clock automata
($\ECA$)~\cite{DBLP:journals/tcs/AlurFH99}, a determinizable subclass
of timed automata~\cite{DBLP:journals/tcs/AlurD94}, under timed
bisimilarity.  Models and specifications are assume to be
deterministic, thus $\mcal S_1\le \mcal S_2$ iff
$\Mod{ \mcal S_1}\subseteq \Mod{ \mcal S_2}$ in this case.

In~\cite{DBLP:journals/scp/BertrandLPR12} it is shown that $\MECS$
admit a conjunction, thus forming a meet-semilattice up to~$\equiv$.
The authors also introduce composition and quotient; but computation
of quotient incurs an exponential blow-up.  Altogether, $\MECS$ form a
complete compositional semi-logical specification theory: a bounded
residuated semilattice-ordered commutative semigroup.

\subsection{Timed input/output automata}

\cite{DBLP:journals/sttt/DavidLLNTW15,
  DBLP:journals/sttt/DavidLLMNRSW12}~introduce a specification theory
based on a variant of the timed input/output automata ($\TIOA$)
of~\cite{DBLP:series/synthesis/2010Kaynar,
  DBLP:conf/rtss/KaynarLSV03}.  Both models and specifications are
$\TIOA$ which are action-deterministic and input-enabled; but models are
further restricted using conditions of output urgency and independent
progress.  The equivalence on models being specified is timed
bisimilarity.

In~\cite{DBLP:journals/sttt/DavidLLNTW15} it is shown that $\TIOA$ admit
a conjunction, so they form a meet-semilattice up to~$\equiv$.  The
paper also introduces a composition operation and a quotient, but the
quotient is only shown to satisfy the property that
\begin{equation*}
  \mcal S_1\| \mcal M\le \mcal S_3 \liff \mcal M\le \mcal S_3\by \mcal
  S_1
\end{equation*}
for all specifications $\mcal S_1, \mcal S_3$ and all \emph{models}
$\mcal M$, which is strictly weaker than~\eqref{eq:quotient}.  With
this caveat, $\TIOA$ form a complete compositional semi-logical
specification theory: a bounded residuated semilattice-ordered
commutative semigroup.

\subsection{Abstract probabilistic automata}

Abstract probabilistic automata ($\APA$), introduced
in~\cite{DBLP:journals/iandc/DelahayeKLLPSW13,
  journals/lmcs/DelahayeFLL14}, form a specification theory for
probabilistic automata ($\PA$)~\cite{DBLP:journals/njc/SegalaL95}
under probabilistic bisimilarity.  They build on earlier models of
interval Markov chains
($\IMC$)~\cite{DBLP:journals/jlp/DelahayeLLPW12}, see
also~\cite{DBLP:journals/tcs/BartDFLMT18,
  DBLP:conf/vmcai/DelahayeLP16} for a related line of work.

In~\cite{DBLP:journals/iandc/DelahayeKLLPSW13} it is shown that $\APA$
admit a conjunction, but that $\IMC$ do not.  Also a composition is
introduced in~\cite{DBLP:journals/iandc/DelahayeKLLPSW13}, and it is
shown that composing two $\APA$ with interval constraints (an $\IMC$)
may yield an $\APA$ with \emph{polynomial} constraints (not an
$\IMC$); but $\APA$ with polynomial constraints are closed under
composition.  $\APA$ form a compositional semi-logical specification
theory: a bounded semilattice-ordered commutative semigroup.

\begin{table}[tbp]
  \centering
  \caption{Algebraic taxonomy of some specification theories.
    Abbreviations: L---logical; C---compositional; Q---complete}
  \label{fi:taxo}
  \medskip
  \renewcommand*\arraystretch{1.2}
  \setlength{\tabcolsep}{4pt}
  \begin{tabular}{c|c|c|c|c|p{5cm}}
    Specifications & Models & L & C & Q & Notes \\\hline
    $\HMLR$, $\DMTS$, $\NAA$ & $\LTS$, bisim. & \cmark & \cmark & \cmark &
    \cite{DBLP:journals/iandc/BenesFKLT20}; bisimulation \\
    $\HMLR$, $\DMTS$, $\NAA$ & $\LTS$, any & \xmark & \xmark & \xmark &
    \cite{DBLP:journals/jlap/FahrenbergL20}; any equivalence in
    LTBT spectrum~\cite{inbook/hpa/Glabbeek01} \\
    $\DMTS$ & $\LTS$, fail./div. & \onehalf & \cmark & \xmark &
    \cite{DBLP:journals/tecs/BujtorSV17}; failure/divergence
    equivalence; no disjunction \\
    $\MECS$ & $\ECA$, t.bisim. & \onehalf & \cmark & \cmark &
    \cite{DBLP:journals/scp/BertrandLPR12}; timed bisim.; no
    disjunction \\
    $\TIOA$ & $\TIOA$, t.bisim. & \onehalf & \cmark & \onehalf &
    \cite{DBLP:journals/sttt/DavidLLNTW15}; no
    disjunction; weak quotient \\
    $\IMC$ & $\PA$, p.bisim. & \xmark & \xmark & \xmark &
    \cite{DBLP:journals/jlp/DelahayeLLPW12}; probabilistic bisim. \\
    $\APA$ & $\PA$, p.bisim. & \onehalf & \cmark & \xmark &
    \cite{DBLP:journals/iandc/DelahayeKLLPSW13}; no disjunction
  \end{tabular}
\end{table}

Table~\ref{fi:taxo} sums up the algebraic properties of the different
specification theories we have surveyed here, plus the specification
theory for failure/divergence semantics based on $\DMTS$
from~\cite{DBLP:journals/tecs/BujtorSV17}.

\bibliographystyle{plain}
\bibliography{mybib}

\end{document}